\algnewcommand{\LeftComment}[1]{\Statex \(\triangleright\) #1}
\newtheorem{mytheorem}{Theorem}
\newtheorem{mydefinition}[mytheorem]{Definition}
\newtheorem{mylemma}[mytheorem]{Lemma}
\algrenewcommand\algorithmicrequire{\textbf{Input:}}
\algrenewcommand\algorithmicensure{\textbf{Output:}}
\newcommand{\decprob}[4]{
\vspace{2mm}
\noindent\fbox{
\begin{minipage}{0.94\textwidth}
\begin{tabular*}{\textwidth}{@{\extracolsep{\fill}}lr} \textsc{#1}  & {\bf{ }} #2
\\ \end{tabular*}
{\bf{Input:}} #3  \\
{\bf{Question:}} #4
\end{minipage}
}
\vspace{2mm}
}
\newcommand{\optprob}[4]{
\vspace{2mm}
\noindent\fbox{
\begin{minipage}{1\textwidth}
\begin{tabular*}{\textwidth}{@{\extracolsep{\fill}}lr} \textsc{#1}  & {\bf{ }} #2
\\ \end{tabular*}
{\bf{Input:}} #3  \\
{\bf{Output:}} #4
\end{minipage}
}
\vspace{2mm}
}
\newtheorem{reduction rule}{Reduction Rule}[section]
\date{}
\title{Color-Constrained Arborescences in Edge-Colored Digraphs\thanks{This work is supported by SERB CRG grant number CRG/2022/007751.}}
\author{P.~S.~Ardra\footnote{111914001@smail.iitpkd.ac.in}~}
\author{Jasine~Babu\footnote{jasine@iitpkd.ac.in}~}
\author{R.~Krithika\footnote{krithika@iitpkd.ac.in}~}
\author{Deepak~Rajendraprasad\footnote{deepak@iitpkd.ac.in}~}
\affil{Indian Institute of Technology Palakkad, Palakkad, India.}
\begin{document}
\maketitle
\begin{abstract}
Given a multigraph $G$ whose edges are colored from the set $[q]:=\{1,2,\ldots,q\}$ (\emph{$q$-colored graph}), and a vector $\alpha=(\alpha_1,\ldots,\alpha_{q}) \in \mathbb{N}^{q}$ (\emph{color-constraint}), a subgraph $H$ of $G$ is called \emph{$\alpha$-colored}, if $H$ has exactly $\alpha_i$ edges of color $i$ for each $i \in[q]$. In this paper, we focus on $\alpha$-colored arborescences (spanning out-trees) in $q$-colored multidigraphs. We study the decision, counting and search versions of this problem. It is known that the decision and search problems are polynomial-time solvable when $q=2$ and that the decision problem is \NP-complete when $q$ is arbitrary. However the complexity status of the problem for fixed $q$ was open for $q > 2$. We show that, for a $q$-colored digraph $G$ and a vertex $s$ in $G$, the number of $\alpha$-colored arborescences in $G$ rooted at $s$ for all color-constraints $\alpha \in \mathbb{N}^q$ can be read from the determinant of a symbolic matrix in $q-1$ indeterminates. This result extends Tutte's matrix-tree theorem for directed graphs and gives a polynomial-time algorithm for the counting and decision problems for fixed $q$.  We also use it to design an algorithm that finds an $\alpha$-colored arborescence when one exists. Finally, we  study the weighted variant of the problem and give a polynomial-time algorithm (when $q$ is fixed) which finds a minimum weight solution. 
\end{abstract}

\section{Introduction}
\epigraph{
\emph{Colors must fit together as pieces in a puzzle or cogs in a wheel.}}
{Hans Hofmann, Painter}

Many textbook problems in graph theory are on the existence of subgraphs of a specific kind -- triangles, cycles, spanning trees, perfect matchings, Hamiltonian cycles, and so on. A similar, albeit shorter, list of problems entice us on directed graphs too -- directed cycles, arborescences, directed Hamiltonian cycles. The corresponding algorithmic questions -- existence, counting, finding, enumerating -- serve as common invitations to the study of graph algorithms. One way to spice up these problems is to color the edges of the input graph and look for solutions in which the colors ``fit together''. More precisely, we assume that the edges of the input graph (undirected or directed, simple or otherwise) are colored with a palette $[q]$ of colors and that a color-constraint $\alpha \in \mathbb{N}^q$ is specified. We then restrict the solution-space to those subgraphs which are $\alpha$-colored, that is, those which have exactly $\alpha_i$ edges of color $i$ for each $i \in [q]$. We call these color-constrained subgraph problems.

These problems take a special flavor when the subgraphs we are looking for turn out to be bases of a matroid (spanning tree) or common bases of two or more matroids (perfect matching in bipartite graphs, arborescences in digraphs). The color-constraint we formalized above is also matroidal -- the edge-set of the input graph forms the ground set, the color-classes partition the ground set, and the color-constraint gives the capacity constraint for a partition matroid. Hence adding the color-constraint adds one more matroidal dimension to the problem. In particular, a color-constrained spanning tree is a common basis of two matroids while a color-constrained bipartite perfect  matching and a color-constrained arborescence are common bases of three matroids. Finding a largest common independent set of $k$ matroids on the same ground set is the \textsc{$k$-Matroid Intersection} problem.  A greedy algorithm solves it efficiently when $k=1$. There are many general polynomial-time algorithms to solve the case when $k=2$ and more efficient specialized algorithms for common special cases like bipartite matching and arborescence~\cite{Schrijver03}. The problem is \NP-hard for $k \geq 3$ in general. However, this does not rule out the existence of polynomial-time algorithms for special cases.

Color-constrained subgraph problems of this matroidal nature have received considerable attention since 1980s~\cite{BarahonaP87,GabowT84,Gusfield84,PapadimitriouY82,RendlL89}. 
Work on finding an $\alpha$-colored spanning tree in a $q$-colored undirected graph was started by Papadimitriou and Yannakakis in 1982 with a solution for the case of $q=2$~\cite{PapadimitriouY82}. They employed an elegant exchange procedure that starts with an efficiently-computable spanning tree, performs a sequence of ``edge swaps'' and ends with the required solution. Gabow and Tarjan~\cite{GabowT84} observed that a similar ``exchange sequence'' exists in the more general setting of matroids and used it to find a min-weight $\alpha$-colored spanning tree of a weighted $2$-colored graph in 1984. An alternative algorithm for the same problem was given by Gusfield~\cite{Gusfield84} at around the same time. Later, Rendl and Leclerc gave a polynomial-time algorithm for arbitrary number of colors~\cite{RendlL89} in 1989. The problem of finding a $(k, n/2 - k)$-colored perfect matching in an $n$-vertex graph, often called \textsc{Exact Perfect Matching}, was also born out of the same 1982 paper by Papadimitriou and Yannakakis \cite{PapadimitriouY82}. This one turned out to be much tougher to tame than its sibling. Papadimitriou and Yannakakis~\cite{PapadimitriouY82} conjectures that the problem is \NP-hard. Mulmuley, Vazirani and Vazirani \cite{mulmuley1987matching} gave a randomized polynomial-time algorithm for it in 1987. Despite several subsequent works on the problem, it still remains as one of those rare examples of a problem in \RP\ not yet known to be in \P. However, finding a color-constrained perfect matching in a $q$-colored bipartite graph is \NP-hard if $q$ is unbounded~\cite{itai1977some,tanimoto1978some}.

 In this work, we are interested in finding and counting $\alpha$-colored arborescences rooted at a given vertex $s$ in $q$-colored digraphs. Before describing our work further, we make the decision variant of the problem precise. For reasons that will become clear soon, we only specify the constraints on the first $q-1$ colors. Given a color-constraint $\alpha=(\alpha_1,\ldots,\alpha_{q-1}) \in \mathbb{N}^{q-1}$, an arborescence of a $q$-colored $n$-vertex digraph is said to be an {\em $\alpha$-colored $s$-arborescence} if it is rooted at $s$ and has exactly $\alpha_i$ edges of color $i$ for each $i \in[q-1]$. The number of edges of color $q$ is automatically constrained.

\decprob{\textsc{Color-Constrained Arborescence (CC-Arb)}}{}{A $q$-colored $n$-vertex multidigraph $G$, a vertex $s\in V(G)$ and a color-constraint $\alpha \in \mathbb{N}^{q-1}$.}{Does $G$ have an $\alpha$-colored $s$-arborescence?}

Since \textsc{CC-Arb} is a special case of $3$-matroid intersection, we do not have any off-the-matroid-shelf algorithm for the problem. In 1987, Barahona and Pulleyblank~\cite{BarahonaP87} gave a polynomial-time algorithm for the case of $q=2$. Recently,  Ardra et al. \cite{ArdraNKKPR24} showed that the problem is \NP-complete for arbitrary $q$. The complexity status of the problem for fixed $q$ was open for $q > 2$. 

\medskip\noindent 
\textbf{Results.} We show that, for every fixed $q$, \textsc{CC-Arb} is polynomial-time solvable by extending Tutte's matrix-tree theorem for directed multigraphs to $q$-colored directed multigraphs (Theorem~\ref{thm:ext-Tutte}). This theorem also allows us to efficiently count the number $\alpha$-colored $s$-arborescences. With some additional work, we can also use it to  find in polynomial time an $\alpha$-colored $s$-arborescence when one exists. Finally, we show that our methods can be extended to the case of finding a min-weight $\alpha$-colored $s$-arborescence in a weighted $q$-colored multidigraph.




\medskip
\noindent \textbf{Terminology.} We refer to the book by Diestel~\cite{Diestel17} for standard graph-theoretic definitions and terminology not defined here. For a digraph $G$, let $V(G)$ and $E(G)$ denote the sets of its vertices and edges/arcs, respectively.  The {\em out-degree} (resp., {\em in-degree}) of a vertex $v$ is the number of edges leaving (resp. entering) $v$. For a digraph $G$, let $\overleftarrow{G}$ denote the graph obtained from $G$ by reversing its arcs. 

We assume that the input $q$-colored digraph is in the adjacency matrix representation whose entries are elements of $\mathbb{N}^q$ which encode the number of edges of each color. For a $q$-colored digraph $G$ and a vertex $s \in V(G)$, the $q$-colored digraph obtained from $G$ by adding a self-loop of color $q$ at $s$ is denoted as $G^s$. For vertices $i,j$ and color $c \in [q]$, let $d_{ijc}$ denote the number of $c$-colored edges  from $i$ to $j$. 

A {\em functional digraph} is a digraph in which every vertex has out-degree exactly one. 
A spanning functional subgraph of an $n$-vertex digraph has exactly $n$ edges. Given a color-constraint $\alpha=(\alpha_1,\ldots,\alpha_{q-1}) \in \mathbb{N}^{q-1}$, a spanning functional subgraph of a $q$-colored digraph is said to be $\alpha$-colored if it has exactly $\alpha_i$ edges from $E_i$ for each $i \in[q-1]$.

\section{Color-Sensitive Matrix-Tree Theorems}

Two classical Laplacian matrices associated with a directed multigraph $G$ on the vertex-set $[n]$ are the out-degree Laplacian $L_G$ and the in-degree Laplacian $L'_G$ defined as follows.

\begin{equation}
\label{eqn:out-Laplacian}
    L_G[i,j] = 
    \begin{cases}
        -d_{ij},                & \text{ if } i\neq j,  1\leq i,j\leq n\\ 
        \sum_{k=1}^n d_{ik},    & \text{ if } i=j,  1\leq i\leq n
    \end{cases},   
\end{equation}
and
\begin{equation}
\label{eqn:in-Laplacian}
    L'_G[i,j] = 
    \begin{cases}
        -d_{ji},                & \text{ if } i\neq j,  1\leq i,j\leq n\\ 
        \sum_{k=1}^n d_{ki},    & \text{ if } i=j,  1\leq i\leq n
    \end{cases},   
\end{equation}
where $d_{ij}$ is the number of arcs from vertex $i$ to vertex $j$. It is easy to verify that the in-degree Laplacian of $G$ is same as the out-degree Laplacian of $\overleftarrow{G}$ and vice-versa.

Tutte's directed matrix-tree theorem~\cite{Tutte48} states that, if $G$ is a digraph without self-loops, then the determinant of the submatrix obtained by deleting the $i$-th row and column of the out-degree Laplacian $L_G$ (resp. the in-degree Laplacian $L'_G$) is the number of spanning in-trees (resp. out-trees) of $G$ rooted at $i$. Even though our results are stated on spanning out-trees, we will work with the out-degree Laplacian and in-trees in the proof of Theorem~\ref{thm:ext-Tutte} so that we can use the language of functional digraphs. Instead of doing the row and column deletions on $L_G$, one can also add a self-loop at $i$ to (the otherwise loopless) $G$ and then compute the determinant of the out-degree Laplacian of the resulting digraph $G^i$. One can verify this equivalence easily by seeing the new Laplacian $L_{G^i}$ as the result of deleting the $(n+1)$-th row and column of the Laplacian of a supergraph of $G$ obtained by adding a new vertex $n+1$ and a new arc $(i, n+1)$ to $G$. A similar observation holds good for the in-degree Laplacian too. 

In order to count color-constrained arborescences in $q$-colored digraphs, we define two symbolic matrices in $q-1$ indeterminates to take the role of the two Laplacians.

\begin{mydefinition}
    For a $q$-colored multidigraph $G$ on the vertex-set $[n]$, the symbolic out-degree Laplacian matrix $L_G$ and the symbolic in-degree Laplacian matrix $L'_G$ are
    \begin{equation}
    \label{eqn:symOutLaplacian}
    L_G[i,j]= 
    \begin{cases}
        -\sum_{c=1}^q d_{ijc}~x_{c}, & 
            \text{ if } i\neq j,  1\leq i,j\leq n\\ 
        \sum_{k=1}^n \sum_{c=1}^q d_{ikc}~x_{c}, 
            & \text{ if } i=j,  1\leq i\leq n
    \end{cases},    
    \end{equation}
    and
    \begin{equation}
    \label{eqn:symInLaplacian}
    L'_G[i,j]= 
    \begin{cases}
        -\sum_{c=1}^q d_{jic}~x_{c}, & 
            \text{ if } i\neq j,  1\leq i,j\leq n\\ 
        \sum_{k=1}^n \sum_{c=1}^q d_{kic}~x_{c}, 
            & \text{ if } i=j,  1\leq i\leq n
    \end{cases},    
    \end{equation}
where $x_1,\ldots,x_{q-1}$ are indeterminates, $x_q=1$ and $d_{ijc}$ is the number of $c$-colored arcs from vertex $i$ to vertex $j$.
\end{mydefinition}

When each $x_c$ is set to $1$, $L_{G}$ collapses to the out-degree Laplacian of the underlying uncolored digraph. It is also interesting that we can decompose $L_G$ as
\begin{equation}
    L_G = \sum_{c = 1}^q L_{G_c} x_c,
\end{equation}
where $L_{G_c}$ is the classical out-degree Laplacian of the spanning subgraph $G_c$ of $G$ consisting of the $c$-colored edges of $G$. A similar collapse and decomposition applies to the symbolic in-degree Laplacian too. 

The course is now set to introduce the work-horse of this paper.

\begin{mytheorem}[Extended Tutte's Theorem]
\label{thm:ext-Tutte}
    Given a $q$-colored loopless multidigraph $G$, a vertex $s \in V(G)$ and a color-constraint $\alpha \in \mathbb{N}^{q-1}$, the number of $\alpha$-colored $s$-arborescences of $G$ is the coefficient of the monomial $\prod_{c=1}^{q-1} x_c^{\alpha_c}$ in the determinant polynomial of the submatrix obtained by deleting the row and column corresponding to $s$ from the symbolic in-degree Laplacian $L'_G$ of $G$.
\end{mytheorem}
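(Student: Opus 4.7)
The plan is to adapt the classical functional-subgraph proof of Tutte's directed matrix-tree theorem, carrying the color-indeterminates $x_1, \ldots, x_{q-1}$ through the argument. My first step is to pass to the out-degree Laplacian of the arc-reversal: set $H = \overleftarrow{G}$ and observe from the definitions that $L'_G = L_H$, while reversing arcs gives a color-preserving bijection between the $s$-arborescences of $G$ and the spanning in-trees of $H$ rooted at $s$. I then use the standard self-loop trick: let $H^s$ be $H$ with one extra self-loop of color $q$ added at $s$. Because $G$ (and hence $H$) is loopless, $L_H$ has zero row-sums and $\det L_H = 0$; performing Laplace expansion along row $s$ in both $L_H$ and $L_{H^s}$ (which agree outside the entry $(s,s)$) gives $\det L_{H^s} = x_q \det M = \det M$, where $M$ is the submatrix of $L'_G$ obtained by deleting the row and column indexed by $s$.

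The heart of the argument is to show that $\det L_{H^s}$ equals the color-weighted in-tree polynomial $\sum_T \prod_{e \in T} x_{c(e)}$, where $T$ ranges over the $s$-in-trees of $H$ and $c(e)$ denotes the color of arc $e$. I would rewrite each diagonal entry as $L_{H^s}[i,i] = \sum_k A[i,k]$, where $A[i,k] = \sum_c d^H_{ikc} x_c$ (with $A[s,s] = x_q$ accounting for the added self-loop), and split the diagonal sums inside the standard expansion $\det L_{H^s} = \sum_\sigma \mathrm{sgn}(\sigma) \prod_i L_{H^s}[i,\sigma(i)]$. This re-indexes each monomial of the expansion by a pair $(\sigma, \tau)$, where $\tau : [n] \to [n]$ is a functional subgraph of $H^s$ that must agree with $\sigma$ on the non-fixed points of $\sigma$. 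For a fixed $\tau$, the admissible $\sigma$'s correspond bijectively to subsets $S$ of the length-$\geq 2$ cycles of $\tau$ (with $\sigma|_S = \tau|_S$ and $\sigma$ the identity off $S$), and a short parity calculation shows that $\mathrm{sgn}(\sigma) \cdot (-1)^{|S|}$ (the second factor coming from the off-diagonal minus signs) simplifies to $(-1)^{k(S)}$, where $k(S)$ is the number of cycles selected in $S$. The alternating sum $\sum_S (-1)^{k(S)}$ then vanishes unless $\tau$ has no cycle of length $\geq 2$. Since the only self-loop in $H^s$ is at $s$, the surviving $\tau$'s consist precisely of that self-loop together with the arcs of an $s$-in-tree of $H$, and summing their weights yields the claimed polynomial.

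The main obstacle I anticipate is the sign-cancellation bookkeeping: one must verify carefully that, for each functional subgraph $\tau$, the admissible permutations $\sigma$ are exactly those supported on unions of length-$\geq 2$ cycles of $\tau$, and that the product of $\mathrm{sgn}(\sigma)$ with the $(-1)^{|S|}$ from the off-diagonal signs collapses to $(-1)^{k(S)}$. Once that is in hand, the rest is Tutte's original argument running essentially unchanged, with the symbolic coefficients passively tracking the color of each chosen arc; extracting the coefficient of $\prod_{c=1}^{q-1} x_c^{\alpha_c}$ from the resulting polynomial then yields the number of $\alpha$-colored $s$-arborescences of $G$ and completes the proof.
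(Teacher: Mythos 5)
Your proof is correct, but it takes the route that the paper explicitly mentions and then declines to follow ``in the interest of space'': you prove the cancellation of functional subgraphs with non-singleton cycles from scratch via the Leibniz expansion, whereas the paper outsources exactly that step to a generalization of Tutte's theorem due to Moon (Corollary~4.1 in \cite{Moon94}), applied with the substitution $z_{ij} = \sum_c d_{ijc} x_c$. The surrounding architecture is the same in both arguments --- symbolic Laplacian, arc reversal to trade $s$-arborescences of $G$ for in-trees of $\overleftarrow{G}$, the color-$q$ self-loop at $s$, and reading off the coefficient of $x^\alpha$ --- so the real difference is that your version is self-contained at the cost of the sign bookkeeping ($\mathrm{sgn}(\sigma)\cdot(-1)^{|S|} = (-1)^{k(S)}$ and the vanishing alternating sum over unions of long cycles), which you have set up correctly, while the paper's version is shorter but rests on the cited result. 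One minor point in your favor: you justify $\det L_{H^s} = \det M$ via $\det L_H = 0$ (zero row sums) and expansion along row $s$, which works without any assumption on arcs entering $s$, whereas the paper's corresponding step first discards the incoming arcs at $s$ so that the $s$-th row of $L'_{G^s}$ becomes a unit row. Both are fine; your argument would stand as a complete alternative proof.
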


The original Tutte's theorem has many proofs~\cite{borchardt1860ueber,chaiken1978matrix,de2020elementary,Tutte48}, the first of which predates that of Tutte by nearly a century~\cite{borchardt1860ueber}. A long but instructive way to prove Theorem~\ref{thm:ext-Tutte} would be to adapt one of the elementary proofs of Tutte's theorems which expands the determinant of $L_{\overleftarrow{G^s}}$ (where $G^s$ is $G$ with a self-loop added to $s$) using the  Leibniz formula where each non-zero term corresponds to a product over the arcs of a functional subgraph of $\overleftarrow{G^s}$ and then showing that all terms corresponding to functional subgraphs with non-singleton cycles cancel. However, in the interest of space, we derive our result from the following generalization of Tutte's theorem by Moon~\cite{Moon94}.

\begin{mytheorem}[Corollary~4.1 in \cite{Moon94}]
\label{thm:Moon}
    Let $M$ be an $n\times n$ matrix defined as \\
    \begin{equation}
    \label{eqn:MoonM}
      M[i,j]=\begin{cases}
            -z_{ij} & \text{ if } i\neq j, 1\leq i,j\leq n\\ 
            \overset{n}{\underset{k=1}{\sum}} z_{ik} & \text{ if } i=j, 1\leq i\leq n\\
        \end{cases},
    \end{equation}
    where $z_{ij}$ are indeterminates for each $i,j \in [n]$. 
    Then 
    \begin{equation}
    \label{eqn:MoonDet}
        \det(M) = \sum_{D\in \mathcal{D}} \prod_{(i,j) \in E(D)} z_{ij},      
    \end{equation}
    where $\mathcal{D}$ is the set of all functional digraphs $D$ with $V(D)=[n]$ such that each cycle of $D$ has length $1$. 
\end{mytheorem}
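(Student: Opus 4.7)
The plan is to derive Theorem~\ref{thm:ext-Tutte} directly from Moon's formula in Theorem~\ref{thm:Moon} after translating the statement from the in-degree/row-deletion side to the out-degree/self-loop side, where Moon's theorem naturally applies.

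First, I would convert the reduced determinant into a full-matrix determinant. Since $L'_G = L_{\overleftarrow{G}}$ and, as the paper observes, deleting the row and column corresponding to $s$ from an out-degree Laplacian is equivalent to taking the full out-degree Laplacian of the graph after adding a self-loop at $s$, we have
\[
    \det\bigl(L'_G[\hat s,\hat s]\bigr) \;=\; \det\bigl(L_{\overleftarrow{G^s}}\bigr).
\]
The added color-$q$ self-loop contributes the constant $1$ (since $x_q = 1$) to the $(s,s)$ diagonal entry, which is exactly the shape Moon's matrix demands for its ``$z_{ss}$'' summand.

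Next, I would apply Theorem~\ref{thm:Moon} with the polynomial substitution $z_{ij} := \sum_{c=1}^{q} d^{\overleftarrow{G^s}}_{ijc}\, x_c$ for all $i,j \in [n]$. Since Moon's identity holds as a polynomial identity in the $z_{ij}$'s, it survives this substitution and gives
\[
    \det\bigl(L_{\overleftarrow{G^s}}\bigr) \;=\; \sum_{D \in \mathcal{D}} \prod_{(i,j) \in E(D)} \Bigl( \sum_{c=1}^{q} d^{\overleftarrow{G^s}}_{ijc}\, x_c \Bigr).
\]
The structural heart of the proof is then the observation that $s$ is the only vertex of $\overleftarrow{G^s}$ carrying a self-loop, so $z_{ii} = 0$ for every $i \neq s$. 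Hence only those $D \in \mathcal{D}$ whose unique $1$-cycle is the loop at $s$ contribute; each such $D$ consists of a spanning in-tree of $\overleftarrow{G^s}$ rooted at $s$ together with that loop, and reversing arcs puts these in bijection with the $s$-arborescences of $G$.

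Finally, I would expand each surviving product monomially. For each arc $(i,j) \in E(D)$, the factor $\sum_c d^{\overleftarrow{G^s}}_{ijc}\, x_c$ enumerates, weighted by $x_c$, the $c$-colored parallel arcs from $i$ to $j$; distributing the product then yields one monomial per choice of an individual colored arc along each edge of $D$. Summing over $D$ produces
\[
    \det\bigl(L_{\overleftarrow{G^s}}\bigr) \;=\; \sum_{T} \prod_{e \in E(T)} x_{c(e)},
\]
where $T$ ranges over $s$-arborescences of $G$ and $c(e)$ is the color of arc $e$. Since every spanning arborescence has $n-1$ arcs, the count $\alpha_q$ is determined by $\alpha_1,\ldots,\alpha_{q-1}$, so grouping by color-profile shows that the coefficient of $\prod_{c=1}^{q-1} x_c^{\alpha_c}$ equals the number of $\alpha$-colored $s$-arborescences, as required. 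The main obstacle I anticipate is the bookkeeping in this last step: one must check carefully that distributing the parallel-arc sum over $\prod_{(i,j) \in E(D)} z_{ij}$ accounts for each colored arborescence of $G$ exactly once, with no double counting across different $D$'s and no omitted cases. This hinges on the fact that each colored arborescence determines a unique underlying simple functional digraph $D$, making the passage from Moon's sum to a sum over arborescences a clean disjoint union.
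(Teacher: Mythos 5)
Your proposal does not prove the statement in question. The statement is Moon's determinant identity (Theorem~\ref{thm:Moon}): a pure polynomial identity in the $n^2$ indeterminates $z_{ij}$, asserting that $\det(M)$ equals the sum, over all functional digraphs $D$ on $[n]$ whose cycles are all loops, of $\prod_{(i,j)\in E(D)} z_{ij}$. It makes no reference to colors, to a digraph $G$, to a root $s$, or to arborescences. What you have written instead is a derivation of Theorem~\ref{thm:ext-Tutte} \emph{from} Theorem~\ref{thm:Moon}: you take Moon's identity as given, substitute $z_{ij} = \sum_c d_{ijc}x_c$, and then argue about self-loops, arc reversal, and the bijection with $s$-arborescences. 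That is essentially the content of Lemma~\ref{lem:nFuncDigraphs} and Lemma~\ref{lem:nArb} in the paper (and your version of it is fine as far as it goes), but it assumes the very identity you were asked to establish, so as a proof of Theorem~\ref{thm:Moon} it is circular and the entire argument is missing. (The paper itself imports this statement from Moon without proof, so there is no in-paper proof to match; but the task was still to prove the identity, not to apply it.)

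To actually prove Theorem~\ref{thm:Moon}, you would need to work directly with $\det(M)$ as a polynomial in the $z_{ij}$. The standard route, which the paper alludes to when discussing the ``long but instructive'' proof of Theorem~\ref{thm:ext-Tutte}, is: expand $\det(M)$ by the Leibniz formula and by multilinearity of each diagonal entry $\sum_k z_{ik}$, so that every resulting monomial is $\pm\prod_i z_{i\,f(i)}$ for some function $f:[n]\to[n]$, i.e.\ corresponds to a functional digraph. A monomial whose functional digraph has only loops as cycles arises exactly once, from the identity permutation, with sign $+1$. For a functional digraph containing at least one cycle of length $\geq 2$, the same monomial arises from several permutations (one for each subset of its nontrivial cycles realized through off-diagonal entries), and a sign computation --- $\mathrm{sgn}(\sigma)$ times $(-1)^{\#\text{off-diagonal factors}}$ --- shows these contributions cancel in pairs (or by an inclusion--exclusion over the set of nontrivial cycles). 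None of this appears in your proposal.
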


The role of our symbolic Laplacians in counting is first illustrated for functional subgraphs rather than for arborescences so that the key idea is not lost in self-loops and direction reversals. The shorthand $x^{\alpha}$ will be used to denote the monomial $\prod_{c=1}^{q-1} x_c^{\alpha_c}$ henceforth.

\begin{mylemma}
\label{lem:nFuncDigraphs}
    Given a $q$-colored multidigraph $G$ on the vertex set $[n]$ and a color-constraint $\alpha \in \mathbb{N}^{q-1}$, the coefficient of the monomial $x^{\alpha}$  in the determinant polynomial of the symbolic out-degree Laplacian $L_G$ is the number of $\alpha$-colored spanning functional subgraphs $F$ of $G$ such that each cycle of $F$ has length $1$.
\end{mylemma}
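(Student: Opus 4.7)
The plan is to obtain the lemma as a direct specialization of Moon's theorem (Theorem~\ref{thm:Moon}). I would first observe that the symbolic out-degree Laplacian $L_G$ coincides with Moon's matrix $M$ under the substitution $z_{ij} := \sum_{c=1}^{q} d_{ijc}\, x_c$ for all $i, j \in [n]$ (recalling that $x_q = 1$): the off-diagonal entry $-z_{ij}$ matches $L_G[i,j]$ for $i \neq j$, while the diagonal entry $\sum_{k=1}^{n} z_{ik}$ matches $L_G[i,i]$. Applying Theorem~\ref{thm:Moon} then yields
$$\det(L_G) \;=\; \sum_{D \in \mathcal{D}} \prod_{(i,j) \in E(D)} \sum_{c=1}^{q} d_{ijc}\, x_c,$$
where $\mathcal{D}$ is the set of functional digraphs on $[n]$ in which every cycle has length one.

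Next I would distribute the outer product over the inner sum. For each fixed $D \in \mathcal{D}$ this gives
$$\prod_{(i,j) \in E(D)} \sum_{c=1}^{q} d_{ijc}\, x_c \;=\; \sum_{\phi: E(D)\to [q]} \Bigl(\prod_{(i,j) \in E(D)} d_{ij\phi(i,j)}\Bigr) \Bigl(\prod_{(i,j) \in E(D)} x_{\phi(i,j)}\Bigr).$$
The integer factor has a transparent combinatorial meaning: it counts the number of ways to pick, for each arc $(i,j) \in E(D)$, one of the $d_{ij\phi(i,j)}$ specific parallel arcs of color $\phi(i,j)$ from $i$ to $j$ in $G$. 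Since the choices at distinct arcs of $D$ are independent, each such selection produces a distinct spanning functional subgraph $F$ of $G$ whose underlying ``head-map'' is $D$ and whose edge-coloring pattern is $\phi$. Conversely, every spanning functional subgraph of $G$ in which all cycles have length one arises this way from a unique triple $(D, \phi, \text{arc-selection})$, so the triple sum is in bijection with such subgraphs, each subgraph contributing the monomial $\prod_{c=1}^{q} x_c^{n_c(F)}$, where $n_c(F)$ denotes the number of color-$c$ arcs of $F$.

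To finish, I would collect terms by monomial. Since $x_q = 1$, the monomial $x^{\alpha} = \prod_{c=1}^{q-1} x_c^{\alpha_c}$ is contributed by precisely those $F$ satisfying $n_c(F) = \alpha_c$ for every $c \in [q-1]$; arcs of color $q$ simply account for the remaining $n - \sum_{c=1}^{q-1} \alpha_c$ positions and contribute no factor. Summing across such $F$ gives the total number of $\alpha$-colored spanning functional subgraphs of $G$ whose cycles all have length one, which is exactly what the lemma asserts. I do not anticipate a substantive obstacle; the only point demanding mild care is that a spanning functional subgraph of a \emph{multi}digraph distinguishes parallel arcs, so the multiplicity factor $d_{ij\phi(i,j)}$ must be read as counting arc-selections rather than mere existence of a $\phi(i,j)$-colored arc from $i$ to $j$.
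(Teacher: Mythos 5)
Your proposal is correct and follows essentially the same route as the paper: substitute $z_{ij}=\sum_{c=1}^q d_{ijc}x_c$ into Moon's theorem, expand each product $P_D$ over color assignments, and identify the resulting terms with $\alpha$-colored spanning functional subgraphs whose underlying uncolored digraph is $D$. Your explicit attention to parallel arcs in the multidigraph is a welcome clarification but does not change the argument.
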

\begin{proof}
    If we set $z_{ij} = \sum_{c=1}^q d_{ijc}x_c$  with $x_q = 1$ for all $i,j \in [n]$ (including $i=j$), then the matrix $M$ in (\ref{eqn:MoonM}) is equal to the symbolic out-degree Laplacian $L_G$ of $G$ (Eqn.~(\ref{eqn:symOutLaplacian})). Hence Theorem~\ref{thm:Moon} gives
    \begin{equation}
        \det(L_G) = 
            \sum_{D \in \mathcal{D}} P_D,    
    \end{equation}   
    where $\mathcal{D}$ is the set of all functional digraphs $D$ with $V(D)=[n]$ such that each cycle of $D$ has length $1$ and
    \begin{equation}
        P_D=
            \prod_{(i,j) \in E(D)} 
                \sum_{c=1}^q d_{ijc}x_c.
    \end{equation}   
    
    The polynomial $P_D$ only contains terms coming from the set $\mathcal F$ of spanning functional subgraphs of $G$ whose underlying uncolored digraph is $D$. Each $\alpha$-colored $F \in \mathcal F$ adds a one to the coefficient of $x^{\alpha}$ in the polynomial $P_D$ and this is the only contribution of $F$ to $P_D$. Hence the the coefficient of the monomial $x^{\alpha}$ in $P_D$ is the number of $\alpha$-colored spanning functional subgraphs of $G$ whose underlying uncolored digraph is $D$. Now it is easy to see that the coefficient of the monomial $x^{\alpha}$ in $\det(L_G)$ is the total number of $\alpha$-colored spanning functional subgraphs $F$ of $G$ such that each cycle of $F$ has length 1. 
\end{proof}

We now show how to use Lemma~\ref{lem:nFuncDigraphs} to count arborescences instead of functional subgraphs. Given a $q$-colored loopless digraph $G$, a vertex $s \in V(G)$ and a color-constraint $\alpha \in \mathbb{N}^{q-1}$, let $\mathcal{T}$ denote the set of all $\alpha$-colored $s$-arborescences in $G$. Let $\overleftarrow{G^s}$ be the digraph obtained by adding a self-loop of color $q$ at the vertex $s$ to $G$ and reversing the direction of all the arcs. Let $\mathcal{F}$ denote the set of all $\alpha$-colored spanning functional subgraphs $F$ of $\overleftarrow{G^s}$ such that all cycles in $F$ have length one. It is easy to see that for any $T \in \mathcal{T}$, the graph $\psi(T)$ obtained by reversing all the arcs of $T$ and adding a self-loop of color $q$ at $s$ is in $\mathcal{F}$. Moreover, $\psi$ is a bijection since all the functional digraphs $F \in \mathcal{F}$ contain only one self-loop -- the $q$-colored self-loop at $s$ and hence there is a unique $T \in \mathcal{T}$ obtained by reversing the arcs of $F$ and deleting the self-loop at $s$. Hence $|\mathcal{T}| = |\mathcal{F}|$. This allows us to count $|\mathcal{T}|$ using Lemma~\ref{lem:nFuncDigraphs} and thus we have the next lemma. Recall that $L_{\overleftarrow{G}} = L'_G$ for a digraph $G$.

\begin{mylemma}
\label{lem:nArb}
    Let $G$ be a $q$-colored loopless multidigraph on the vertex set $[n]$, $s \in V(G)$ and $\alpha \in \mathbb{N}^{q-1}$. Let $G^s$ denote the multigraph obtained by adding a self-loop of color $q$ at the vertex $s$. Then the coefficient of the monomial $x^{\alpha}$ in the determinant polynomial of the symbolic in-degree Laplacian $L'_{G^s}$ is the number of $\alpha$-colored $s$-arborescences in $G$.
\end{mylemma}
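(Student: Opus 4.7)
The plan is to reduce Lemma~\ref{lem:nArb} to Lemma~\ref{lem:nFuncDigraphs} by constructing a color-preserving bijection between the set $\mathcal{T}$ of $\alpha$-colored $s$-arborescences of $G$ and the set $\mathcal{F}$ of $\alpha$-colored spanning functional subgraphs of $\overleftarrow{G^s}$ all of whose cycles have length one, and then to invoke the already-noted identity $L_{\overleftarrow{H}} = L'_H$ with $H = G^s$.

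First I would describe the map $\psi$ that sends an arborescence $T \in \mathcal{T}$ to the digraph $\psi(T)$ obtained by reversing every arc of $T$ and adjoining the $q$-colored self-loop at $s$. In $T$ every non-root vertex has in-degree exactly one and $s$ has in-degree zero, so in $\overleftarrow{T}$ every vertex in $V(G)\setminus\{s\}$ has out-degree one while $s$ has out-degree zero; the added loop lifts the out-degree of $s$ to one, so $\psi(T)$ is a spanning functional subgraph of $\overleftarrow{G^s}$. Since $T$ is acyclic, the only cycle of $\psi(T)$ is the new self-loop, hence all cycles have length one. The added loop carries color $q$ and contributes a factor $x_q = 1$, so $\psi(T)$ has the same color-profile $\alpha$ as $T$; therefore $\psi(T) \in \mathcal{F}$.

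Next I would argue that $\psi$ is a bijection by exhibiting an inverse. Given $F \in \mathcal{F}$, iterating the out-edge function from any vertex eventually enters a cycle, which by hypothesis has length one; since $G$ is loopless, the only self-loop in $\overleftarrow{G^s}$ is the $q$-colored loop at $s$, so every length-one cycle of $F$ must be this loop, and in particular $F$ contains it. Removing the loop and reversing all remaining arcs yields a spanning subgraph of $G$ in which every non-root vertex has in-degree one and every vertex is reachable from $s$ along directed paths (since, with $s$ now having out-degree zero, following out-edges from any other vertex terminates at $s$), i.e., an $s$-arborescence of $G$ whose color-profile is still $\alpha$. This recovers $\psi^{-1}$, so $|\mathcal{T}| = |\mathcal{F}|$.

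Finally, applying Lemma~\ref{lem:nFuncDigraphs} to $\overleftarrow{G^s}$ gives that $|\mathcal{F}|$ equals the coefficient of $x^{\alpha}$ in $\det(L_{\overleftarrow{G^s}}) = \det(L'_{G^s})$, which is therefore also $|\mathcal{T}|$. The only conceptually substantive step is the bijection, and within it the only subtlety is verifying that the all-cycles-of-length-one condition, together with the looplessness of $G$, forces any $F \in \mathcal{F}$ to contain the designated self-loop at $s$; once this is established, the reversal of arcs and removal of the loop are clearly inverse operations that preserve color-profiles.
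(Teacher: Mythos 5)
Your proposal is correct and follows essentially the same route as the paper: reversing arcs, adding the $q$-colored self-loop at $s$, establishing the bijection $\psi$ between $\mathcal{T}$ and $\mathcal{F}$, and then invoking Lemma~\ref{lem:nFuncDigraphs} together with $L_{\overleftarrow{G^s}} = L'_{G^s}$. You in fact spell out the inverse direction (why every $F \in \mathcal{F}$ must contain the designated loop and why its reversal is an $s$-arborescence) in more detail than the paper does.
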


Since the incoming arcs at $s$ do not appear in any $s$-arborescence, we can as well assume that there are no incoming arcs at $s$ in $G$. Hence the $s$-th row of $L'_{G^s}$ has $x_q$ (which is $1$) at the $s$-th position (due to the self-loop added) and $0$ everywhere else. Hence the determinant of $L'_{G^s}$ is equal to the determinant of the submatrix of $L'_G$ obtained by deleting the $s$-th row and column. This completes the proof of Theorem~\ref{thm:ext-Tutte}.

We end this section with the undirected analogue of Theorem~\ref{thm:ext-Tutte}. Towards this, we define the symbolic Laplacian of a $q$-colored undirected multigraph as follows.

\begin{mydefinition}
    For a a $q$-colored multigraph $G$ on the vertex-set $[n]$, the symbolic Laplacian matrix $L_G$ is
    \begin{equation}
    \label{eqn:symLaplacian}
    L_G[i,j]= 
    \begin{cases}
        -\sum_{c=1}^q d_{ijc}~x_{c}, & 
            \text{ if } i\neq j,  1\leq i,j\leq n\\ 
        \sum_{k=1}^n \sum_{c=1}^q d_{ikc}~x_{c}, 
            & \text{ if } i=j,  1\leq i\leq n
    \end{cases},    
    \end{equation}
    where $x_1,\ldots,x_{q-1}$ are indeterminates, $x_q=1$ and $d_{ijc}$ is the number of $c$-colored arcs between vertices $i$ and $j$.
\end{mydefinition}

Given a $q$-colored undirected multigraph $G$, let $D$ denote the digraph obtained from $G$ by replacing every edge $e = \{u,v\}$ of $G$ by the pair of edges $(u,v)$, $(v,u)$ of the same color as that of $e$. It is easy to verify that, once we fix a vertex $r$ in $G$, there is a bijection between the set of all spanning trees of $G$ and the set of all $r$-arborescences of $D$. Now applying Theorem~\ref{thm:ext-Tutte} on $D$ with $r$ as the source vertex, we obtain the following result. 

\begin{mytheorem}[Extended Kirchhoff's Theorem]
\label{thm:ext-Kirchoff}
    Given a $q$-colored loopless multigraph $G$ and a color-constraint $\alpha \in \mathbb{N}^{q-1}$, the number of $\alpha$-colored spanning trees of $G$ is the coefficient of the monomial $\prod_{c=1}^{q-1} x_c^{\alpha_c}$ in the determinant polynomial of the submatrix obtained by deleting the first row and column of the symbolic Laplacian $L_G$ of $G$.
\end{mytheorem}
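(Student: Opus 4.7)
The plan is to derive Theorem~\ref{thm:ext-Kirchoff} as an immediate corollary of Theorem~\ref{thm:ext-Tutte} via the reduction sketched just before the statement: replace each undirected edge by two oppositely directed arcs of the same color and then invoke the directed result. Let $G$ be the given $q$-colored loopless multigraph on vertex set $[n]$, let $D$ be the $q$-colored digraph obtained by replacing every edge $\{u,v\}$ of color $c$ by the two arcs $(u,v)$ and $(v,u)$, each of color $c$, and take $r=1$ as the root.

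First I would establish a color-preserving bijection $\phi$ between the spanning trees of $G$ and the $r$-arborescences of $D$. Given a spanning tree $T$ of $G$, orient each of its edges away from $r$; since $D$ contains both orientations of every edge of $G$, the result $\phi(T)$ is a valid $r$-arborescence of $D$ with exactly the same multiset of edge colors as $T$. Conversely, stripping orientations from any $r$-arborescence of $D$ recovers a spanning tree of $G$, and the two maps are inverses. Hence the number of $\alpha$-colored spanning trees of $G$ equals the number of $\alpha$-colored $r$-arborescences of $D$.

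Next I would check that the symbolic in-degree Laplacian $L'_D$ of $D$ coincides with the symbolic Laplacian $L_G$ of $G$. For $i \neq j$, the number of $c$-colored arcs from $j$ to $i$ in $D$ equals the number of $c$-colored edges between $i$ and $j$ in $G$, so the off-diagonal entries match; the diagonals then match since each vertex's total symbolic in-weight in $D$ equals its total symbolic weight in $G$. Applying Theorem~\ref{thm:ext-Tutte} to $D$ with root $r$ tells us that the number of $\alpha$-colored $r$-arborescences of $D$ is the coefficient of $x^\alpha$ in the determinant of the submatrix obtained by deleting the $r$-th row and column of $L'_D = L_G$, which is exactly the desired conclusion.

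There is no serious obstacle here; the only thing requiring care is making sure the symmetric double-arc construction yields $L'_D = L_G$ as symbolic matrices (so no spurious factor of two creeps in) and that $\phi$ sends an $\alpha$-colored tree to an $\alpha$-colored arborescence rather than, say, one whose color counts have doubled. Both facts reduce to the observation that each undirected edge of $T$ contributes exactly one arc of the same color to $\phi(T)$.
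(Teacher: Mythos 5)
Your proposal is correct and follows exactly the route the paper takes: it derives the undirected statement from Theorem~\ref{thm:ext-Tutte} by replacing each edge with a pair of oppositely directed arcs of the same color, using the color-preserving bijection between spanning trees of $G$ and $r$-arborescences of $D$, and observing that $L'_D$ coincides with $L_G$. Your write-up is in fact slightly more careful than the paper's, which leaves the verification of the bijection and of $L'_D = L_G$ implicit.
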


\section{Algorithms for Unweighted Digraphs}

Theorem~\ref{thm:ext-Tutte}, our workhorse, immediately gives us a way to solve \textsc{CC-Arb} and its counting variant. The time-complexity for both is that of computing the determinant of an $(n-1) \times (n-1)$ matrix where each entry is a linear polynomial in $q-1$ indeterminates. Once we do that, we can count the number of $\alpha$-colored $s$-arborescences in the given $q$-colored graph for all legal $\alpha$ in one shot.

There are multiple algorithms that compute the determinant of a matrix of multivariate polynomials - like Gaussian Elimination, Expansion by Minors, Characteristic Polynomial Method and Evaluate-Interpolate Method -- whose running time depends differently on the size of the matrix, the total number of indeterminates, the degree of each polynomial and also on the sparsity of the matrix. Horowitz and Sahni~\cite{HorowitzS75} has a done a theoretical and empirical comparison of these methods. Some of the methods perform much better in practice than others and some of them can also be nicely parallelized \cite{marco2004parallel}. Since we are in the regime where the number of indeterminates $q-1$ is a fixed quantity and each indeterminate appears with a degree one in each entry of the matrix, the Evaluate-Interpolate method with the following time-complexity is a good choice.

\begin{mytheorem}{\em \cite{HorowitzS75}}
\label{thm:horowitz-sahni}
    Let $M$ be an $n\times n$ matrix whose entries are polynomials in $r$ variables $x_1,\ldots,x_r$ and each entry of $M$ has degree at most $d$ in each of the variables. Then the Evaluate-Interpolate Algorithm computes $\det(M)$
    using $\mathcal{\widetilde{O}}(((d+1)n)^{r}(n^{3}+(d+1)n^{2}))$ arithmetic operations.
\end{mytheorem}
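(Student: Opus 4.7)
The plan is to describe the Evaluate-Interpolate algorithm phase by phase and sum the arithmetic cost of each phase, showing that everything fits inside the stated $\widetilde{\mathcal{O}}(((d+1)n)^{r}(n^{3}+(d+1)n^{2}))$ bound.

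\emph{Phase 1 — degree bound.} By the Leibniz expansion, $\det(M)$ is a signed sum of $n!$ products of $n$ entries of $M$. Since each entry has degree at most $d$ in every variable, $\det(M)$ is a polynomial of degree at most $dn$ in each of $x_1,\ldots,x_r$. Hence $\det(M)$ is uniquely determined by its values on any tensor-product grid of $(dn+1)^r$ points.

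\emph{Phase 2 — evaluation on the grid.} For each $k \in [r]$, choose a set $S_k$ of $dn+1$ distinct field elements; since $dn+1 \leq (d+1)n$, the grid $S_1 \times \cdots \times S_r$ has at most $((d+1)n)^r$ points. At each grid point $\boldsymbol{\beta}$ we compute $\det(M(\boldsymbol{\beta}))$ in two steps: substitute $\boldsymbol{\beta}$ into all $n^2$ entries to obtain a numerical matrix, then run Gaussian elimination in $O(n^3)$ arithmetic operations. With proper amortization of the substitution across the tensor-product grid (fix the first $r-1$ coordinates once per ``slice'', reuse precomputed powers, and sweep the last coordinate via Horner's rule), the cost of the substitution step drops to $O((d+1)n^2)$ per grid point. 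Summing over all grid points gives Phase 2 cost $O(((d+1)n)^r (n^3 + (d+1)n^2))$.

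\emph{Phase 3 — interpolation.} Reconstruct $\det(M)$ from its $((d+1)n)^r$ scalar values on the grid by applying univariate Lagrange interpolation one coordinate at a time. Univariate interpolation at $D+1$ sample points costs $\widetilde{O}(D)$ arithmetic operations using fast-polynomial techniques, so iterating over the $r$ coordinates contributes a total of $\widetilde{O}(((d+1)n)^r \cdot (d+1)n)$ operations, which is absorbed by the Phase 2 bound. Adding everything gives the claimed $\widetilde{\mathcal{O}}(((d+1)n)^{r}(n^{3}+(d+1)n^{2}))$ total cost.

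\emph{Main obstacle.} The delicate step is obtaining the ``$(d+1)n^2$'' term in Phase 2. A dense entry of degree $d$ in each of $r$ variables carries up to $(d+1)^r$ monomials, so evaluating each entry at each grid point independently would yield an overhead of $O(((d+1)n)^r \cdot n^2 \cdot (d+1)^r)$, strictly worse than the claim whenever $r \geq 2$. Avoiding this blow-up requires traversing the grid in a structured order so that substitution work done for one coordinate is shared across all grid points that agree on that coordinate, reducing the innermost substitution per grid point to a single univariate Horner evaluation of cost $O((d+1))$ per entry. This tensor-product amortization, standard for multivariate polynomial evaluation and carried out in Horowitz and Sahni's analysis, is the only nontrivial ingredient; once it is in place, the phase-wise addition yields the stated bound.
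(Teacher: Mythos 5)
The paper offers no proof of this statement: it is imported directly from Horowitz and Sahni, with the $\mathcal{\widetilde{O}}$ added only to absorb the cost of merging partial sums, which the original analysis explicitly ignores. So there is nothing in the paper to compare your argument against line by line; judged on its own, your reconstruction of the Evaluate--Interpolate analysis is sound and matches the structure one expects from the cited source. The degree bound of $dn$ per variable via the Leibniz expansion, the tensor grid of at most $((d+1)n)^r$ points, and the $n^{3}$ term from Gaussian elimination per point are all correct. You also correctly isolate the one delicate step, namely that evaluating each entry independently at each grid point would cost $(d+1)^{r}$ per entry per point, and the tensor-product amortization you sketch does resolve it: writing an entry as a degree-$d$ polynomial in $x_r$ whose $d+1$ coefficients are polynomials in the remaining variables gives the recurrence $T(r) = (d+1)\,T(r-1) + O\bigl(d\,((d+1)n)^{r}\bigr)$ for the cost of evaluating one entry over the entire grid, which solves to $O\bigl(d\,((d+1)n)^{r}\bigr)$, i.e.\ $O(d)$ per entry per grid point, hence $O((d+1)n^{2})$ per point for the whole matrix, as you claim. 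The only imprecision is cosmetic: your Phase~3 count of $\mathcal{\widetilde{O}}(((d+1)n)^{r}\cdot(d+1)n)$ corresponds to quadratic Lagrange interpolation per coordinate rather than the $\mathcal{\widetilde{O}}(((d+1)n)^{r})$ of fast univariate interpolation, but either bound is absorbed by the $(d+1)n^{2}$ term, so the stated total stands.
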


Horowitz and Sahni~\cite{HorowitzS75} analyze the number of coefficient multiplications needed to compute $\det(M)$ and explicitly ignore the time taken for merging the partial sums. We use $\mathcal{\widetilde{O}}$ notation to account for this cost since it at most adds a poly-logarithmic factor. 
As our algorithms are algebraic, we first analyze the running time in terms of the number of arithmetic operations on the coefficients of the polynomials involved. Though each polynomial in the input matrix has coefficients upper bounded by $m$, the subsequent computations may result in coefficients that are as large as $(2mn)^n$ which makes single-precision arithmetic infeasible even for moderate input size and we will consider this issue separately.


\begin{mylemma}
    \label{lemma:deterministic-find-arb}
    Given a $q$-colored multidigraph $G$ on $n$ vertices and $m$ edges, a vertex $s \in V(G)$ and a color-constraint $\alpha \in \mathbb{N}^{q-1}$, we can count the number of $\alpha$-colored $s$-arborescences of $G$ using $\mathcal{\widetilde{O}}(2^q \cdot n^{q+2})$ multi-precision operations. 
    Moreover, if there is at least one $\alpha$-colored $s$-arborescence in $G$, then we can find one using $\mathcal{\widetilde{O}}(m \cdot 2^q \cdot n^{q+2})$ multi-precision operations.
\end{mylemma}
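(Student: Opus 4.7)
The plan is to combine the counting formula of Theorem~\ref{thm:ext-Tutte} with the polynomial determinant routine of Theorem~\ref{thm:horowitz-sahni} for the counting claim, and then promote the counting routine to a search routine by a standard edge-deletion self-reduction.

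For the counting claim, I would form the $(n-1)\times(n-1)$ submatrix $M$ of the symbolic in-degree Laplacian $L'_G$ obtained by deleting the row and column indexed by $s$. Each entry of $M$ is a polynomial in the $r = q-1$ indeterminates $x_1,\dots,x_{q-1}$ of degree at most $d = 1$ in each variable. Instantiating Theorem~\ref{thm:horowitz-sahni} with these parameters, the Evaluate-Interpolate method produces $\det(M)$ in dense monomial form using $\widetilde{O}((2n)^{q-1}(n^3 + 2n^2)) = \widetilde{O}(2^q \cdot n^{q+2})$ arithmetic operations, which (because intermediate coefficients can reach $(2mn)^n$, as the paper already notes just above the lemma) are read as multi-precision operations. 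The coefficient of $x^\alpha$ is then looked up in constant time, and Theorem~\ref{thm:ext-Tutte} identifies it as the required count.

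For the search claim, I would invoke the counting routine as an oracle and peel off redundant edges one at a time. Fix an enumeration $e_1,\dots,e_m$ of $E(G)$ and maintain a subgraph $H$, initially equal to $G$. For $i = 1,\dots,m$, call the counting routine on $H - e_i$ with constraint $\alpha$; if the output is positive, replace $H$ by $H - e_i$, otherwise retain $e_i$. A direct induction shows that $H$ contains an $\alpha$-colored $s$-arborescence throughout the loop. After the loop, removing any edge of $H$ destroys all $\alpha$-colored $s$-arborescences, so any witness $T \subseteq H$ must equal $H$: an extra edge $e \in E(H) \setminus E(T)$ would allow $T$ to survive in $H - e$, contradicting the criticality recorded when $e_i = e$ was processed. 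The final $H$ is therefore the desired arborescence, produced in $m$ oracle calls for a total of $\widetilde{O}(m \cdot 2^q \cdot n^{q+2})$ multi-precision operations.

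The main subtlety, rather than a deep obstacle, is the bookkeeping around Theorem~\ref{thm:horowitz-sahni}: one needs $\det(M)$ returned in a representation from which the single coefficient of $x^\alpha$ can be read off immediately (the dense monomial form supplied by Evaluate-Interpolate suffices), and one needs to reinterpret the ``arithmetic operations'' in its statement as multi-precision operations to accommodate the $(2mn)^n$-sized coefficients that arise. No deeper mathematical obstacle is anticipated, as both the counting bound and the self-reduction are direct applications of the tools already developed earlier in the paper.
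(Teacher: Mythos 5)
Your proposal is correct and follows essentially the same route as the paper: delete the $s$-indexed row and column of $L'_G$, apply the Evaluate--Interpolate bound of Theorem~\ref{thm:horowitz-sahni} with $r=q-1$ and $d=1$ to get $\mathcal{\widetilde{O}}(2^q\cdot n^{q+2})$ multi-precision operations, read off the coefficient of $x^{\alpha}$ via Theorem~\ref{thm:ext-Tutte}, and then find a witness by the same edge-by-edge deletion self-reduction with the same criticality argument. The only cosmetic difference is that the paper extracts the coefficient by a linear scan over the $\mathcal{O}(n^q)$ monomials rather than a constant-time lookup, which does not affect the stated bound.
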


\begin{proof}
    Given $G$, relabeling its vertices with labels from $[n]$ and constructing $L'_G$ takes $\mathcal{O}(n^2q)$ time. Let $M$ be the sub-matrix of $L'_G$ obtained by deleting the $s$-th row and column. Observe that each entry of $M$ has degree at most $1$ in each of the $q-1$ variables. Theorem~\ref{thm:horowitz-sahni} guarantees that the Evaluate-Interpolate method takes $\mathcal{\widetilde{O}}(2^q \cdot n^{q+2})$ operations. Once the determinant is obtained, a simple linear scan over all monomials, which are $\mathcal{O}(n^q)$ in number, gives us the coefficient of $x^{\alpha}$. Therefore, the number of $\alpha$-colored $s$-arborescences in $G$ can be determined using $\mathcal{\widetilde{O}}(2^q \cdot n^{q+2})$ operations. 

    Next we describe how to find a solution when one exists. We iterate through each edge $e$ of $G$, removing it from $G$ if $G \setminus e$ has an $\alpha$-colored $s$-arborescence. Since the remaining graph still contains an $s$-arborescence after each removal, the procedure ends with a subgraph $G'$ of $G$ which has an $s$-arborescence. If $G'$ has even one more edge than an $s$-arborescence, then the earliest such edge to be processed in the iteration would have been deleted. Hence $G'$ is itself an $s$-arborescence. This procedure takes $\mathcal{\widetilde{O}}(m \cdot 2^q \cdot n^{q+2})$ operations. Note that we can remove any duplicate edges (parallel edges of the same color) before the search so that $m$ is at most $n^2q$.
\end{proof}

When single-precision arithmetic is infeasible due to blow-up in the size of coefficients, Horowitz and Sahni~\cite{HorowitzS75} recommends solving the problem using modular arithmetic. We choose a sequence of single-precision odd-primes $p_1 < \cdots< p_k$, such that $p_1$ is larger than the maximum number of distinct points needed for any intermediate polynomial evaluation, which is $2n$ in our case, and $\prod_{i=1}^k p_i$ to be larger than the largest coefficient in the final answer which is at most $m^n$ in our case. Hence it suffices to pick $n$ single-precision primes larger than $\max\{m,2n\}$ and repeat the determinant computations over each $\mathbb{Z}_{p_i}$. The time to convert the coefficients in the input matrix to $\mathbb{Z}_{p_i}$ can be ignored since every coefficient is at most $m$ which is less than $p_i$. The time for final reconstruction is $\mathcal{O}(n^2)$ using the Chinese Remainder algorithm~\cite{Gerhard13}. Thus if the number of multi-precision arithmetic operations performed by our algorithm is $t(m,n,q)$ then the overall running time is  $\mathcal{O}(n \cdot t(m,n,q)+ n^2)$ single-precision arithmetic operations. This gives the main result of this section. 


%


\begin{mytheorem}
    \label{thm:deterministic-find-arb}
    Given a $q$-colored multidigraph $G$ on $n$ vertices and $m$ edges, a vertex $s \in V(G)$ and a color-constraint $\alpha \in \mathbb{N}^{q-1}$, we can count the number of $\alpha$-colored $s$-arborescences of $G$ in $\mathcal{\widetilde{O}}(2^q \cdot n^{q+3})$ time. 
    Moreover, if there is at least one $\alpha$-colored $s$-arborescence in $G$, then we can find one in $\mathcal{\widetilde{O}}(m \cdot 2^q \cdot n^{q+3})$ time.
\end{mytheorem}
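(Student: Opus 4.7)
The plan is to convert the multi-precision bound from Lemma~\ref{lemma:deterministic-find-arb} into the claimed single-precision bound by running the underlying Evaluate-Interpolate computation over $\Theta(n)$ single-precision modular rings and reconstructing the answer via the Chinese Remainder Theorem, exactly as sketched in the paragraph preceding the theorem.

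First I would bound the magnitude of every coefficient that appears in the determinant polynomial of interest. By Lemma~\ref{lem:nArb}, each such coefficient is the count of a subclass of $\alpha$-colored spanning functional subgraphs of $G^s$, so it is at most $m^n$. Next I would pick single-precision odd primes $p_1 < p_2 < \cdots < p_k$ with $p_1 > \max\{m, 2n\}$. The lower bound $p_1 > 2n$ ensures each $\mathbb{Z}_{p_i}$ contains enough distinct evaluation points for the Evaluate-Interpolate method applied to polynomials of total degree at most $n$, while $p_1 > m$ lets the input entries be embedded into $\mathbb{Z}_{p_i}$ losslessly at no cost. Choosing $k = n$ such primes of bit-length $O(\log\max\{m,n\})$ gives $\prod_i p_i > m^n$, which by the prime number theorem is achievable with single-precision primes.

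I would then run the Evaluate-Interpolate determinant computation from Lemma~\ref{lemma:deterministic-find-arb} once in each $\mathbb{Z}_{p_i}$. Each run costs $\mathcal{\widetilde{O}}(2^q \cdot n^{q+2})$ single-precision operations, so the $n$ runs total $\mathcal{\widetilde{O}}(2^q \cdot n^{q+3})$ operations. Finally, from the $n$ residues of the target coefficient I would reconstruct its exact integer value in $\mathcal{O}(n^2)$ additional operations using the Chinese Remainder algorithm; this is dominated by the earlier cost. This yields the stated counting bound of $\mathcal{\widetilde{O}}(2^q \cdot n^{q+3})$.

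For the search version, I would plug the counting routine into the self-reduction already used in Lemma~\ref{lemma:deterministic-find-arb}: iterate through the edges in any order and delete each edge whose removal does not destroy all $\alpha$-colored $s$-arborescences, ending with the unique such arborescence. Removing duplicate parallel edges of the same color up front keeps $m \leq n^2 q$, and each test is a counting query, so the total time is $\mathcal{\widetilde{O}}(m \cdot 2^q \cdot n^{q+3})$. The main (and essentially only) obstacle is the bookkeeping check that the two size conditions on the primes -- enough distinct evaluation points, and enough combined modulus to uniquely represent every output coefficient -- can be met simultaneously with primes fitting in a single machine word; with the coefficient bound $m^n$ above this is straightforward.
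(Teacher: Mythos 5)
Your proposal is correct and follows essentially the same route as the paper: bound the output coefficients by $m^n$, run the Evaluate-Interpolate determinant computation over $n$ single-precision primes exceeding $\max\{m,2n\}$, reconstruct by Chinese Remaindering in $\mathcal{O}(n^2)$ operations, and reuse the edge-deletion self-reduction from Lemma~\ref{lemma:deterministic-find-arb} for the search version. The only additions beyond the paper's text (justifying the coefficient bound via Lemma~\ref{lem:nArb} and the existence of the primes via the prime number theorem) are harmless elaborations.
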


\section{Algorithms for Weighted Digraphs} 

In this section, we bring weighted $q$-colored digraphs to limelight and search for a min-weight arborescence. In particular, we study the following problem.


\optprob{\textsc{Minimum Weight Color-Constrained Arborescence(Min CC-Arb)}}{}{A $q$-colored multidigraph $G$ with a weight function $w:E(G) \rightarrow \mathbb{Z}^+$, a vertex $s\in V(G)$ and a color-constraint $\alpha \in \mathbb{N}^{q-1}$.}{A min-weight $\alpha$-colored $s$-arborescence of $G$.}

Note that we can handle negative integral weights by adding a constant bias to all the edges and we restrict the range of weights to $\mathbb Z^+$ for easier reading. Since our interest is limited to finding a min-weight solution, we can get rid of duplicate edges in $G$. Indeed, if we have parallel edges $(i,j)$ of the same color, we need to retain only a min-weight edge among them in $G$. We will henceforth assume that our input graph does not have duplicate edges. In the terminology of the previous section, this would amount to saying $d_{ijc} \in \{0,1\}$. We define the weighted symbolic version of the Laplacian as follows.

\begin{mydefinition}
    For a $q$-colored multidigraph $G$ on the vertex-set $[n]$ without duplicate edges and a weight function $w: E(G) \to \mathbb{Z}^+$, the weighted symbolic out-degree Laplacian matrix $L_{G,w}$ and the weighted symbolic in-degree Laplacian matrix $L'_{G,w}$ are
    \begin{equation}
    \label{eqn:wtSymOutLaplacian}
    L_{G,w}[i,j]= 
    \begin{cases}
        -\sum_{c=1}^q w_{ijc}~x_{c}, & 
            \text{ if } i\neq j,  1\leq i,j\leq n\\ 
        \sum_{k=1}^n \sum_{c=1}^q w_{ikc}~x_{c}, 
            & \text{ if } i=j,  1\leq i\leq n
    \end{cases},    
    \end{equation}
    and
    \begin{equation}
    \label{eqn:wtSymInLaplacian}
    L'_{G,w}[i,j]= 
    \begin{cases}
        -\sum_{c=1}^q w_{jic}~x_{c}, & 
            \text{ if } i\neq j,  1\leq i,j\leq n\\ 
        \sum_{k=1}^n \sum_{c=1}^q w_{kic}~x_{c}, 
            & \text{ if } i=j,  1\leq i\leq n
    \end{cases},    
    \end{equation}
where $x_1,\ldots,x_{q-1}$ are indeterminates, $x_q=1$ and $w_{ijc}$ is the weight of the $c$-colored arc from vertex $i$ to vertex $j$ if one exists and $0$ otherwise.
\end{mydefinition}

For any subgraph $H$ of $G$, $w^{\times}(H)$ will denote $\prod_{e \in E(H)} w(e)$.
One can figure out every coefficient of $\det(L_{G,w})$ by following the proof of Lemma~\ref{lem:nFuncDigraphs}. In the earlier case, each $\alpha$-colored spanning functional subgraph $F$ of $G$ with only singleton cycles contributed a one to the coefficient of $x^{\alpha}$. In the present case, $F$ will contribute $w^{\times}(F)$ to the coefficient of $x^{\alpha}$.  Hence the weighted version of Lemma~\ref{lem:nFuncDigraphs} is the following.


\begin{mylemma}
\label{lem:nWtFuncDigraphs}
    Given a $q$-colored multidigraph $G$ on the vertex set $[n]$ without duplicate edges, a weight function $w: E(G) \to \mathbb{Z}^+$ and a color-constraint $\alpha \in \mathbb{N}^{q-1}$, the coefficient of the monomial $x^{\alpha}$  in the determinant polynomial of the weighted symbolic out-degree Laplacian $L_{G,w}$ is $\sum_{F \in \mathcal{F}_{\alpha}} w^{\times}(F)$, where $\mathcal{F}_{\alpha}$ is the set of $\alpha$-colored spanning functional subgraphs $F$ of $G$ such that each cycle of $F$ has length one.
\end{mylemma}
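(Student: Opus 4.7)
The plan is to mirror the proof of Lemma~\ref{lem:nFuncDigraphs}, with Moon's identity (Theorem~\ref{thm:Moon}) again doing the heavy lifting, but now tracking edge weights inside the indeterminate substitution. Concretely, I would set $z_{ij} = \sum_{c=1}^{q} w_{ijc}\, x_c$ for every pair $i,j \in [n]$, with the convention $x_q = 1$ and $w_{ijc} = 0$ when there is no $c$-colored arc from $i$ to $j$. Under this substitution, the matrix $M$ in~(\ref{eqn:MoonM}) becomes exactly $L_{G,w}$ as defined in~(\ref{eqn:wtSymOutLaplacian}), so Theorem~\ref{thm:Moon} yields
\begin{equation*}
    \det(L_{G,w}) \;=\; \sum_{D \in \mathcal{D}} \; \prod_{(i,j)\in E(D)} \sum_{c=1}^{q} w_{ijc}\, x_c,
\end{equation*}
where $\mathcal{D}$ is the family of functional digraphs on $[n]$ in which every cycle has length one.

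Next I would expand each inner product over $E(D)$ distributively. Because $G$ has no duplicate edges, for each ordered pair $(i,j)$ at most one color $c$ satisfies $w_{ijc} \neq 0$; hence a nonzero term in the expansion picks exactly one colored arc for each $(i,j) \in E(D)$, and the resulting choice is precisely a spanning functional subgraph $F$ of $G$ (with only singleton cycles) whose underlying uncolored digraph is $D$. The monomial contributed by such an $F$ is $\prod_{(i,j)\in E(F)} w(ij)\, x_{c(ij)} = w^{\times}(F)\cdot x^{\beta(F)}$, where $\beta(F) \in \mathbb{N}^{q-1}$ records the number of edges of each color in $F$ (omitting color $q$ since $x_q = 1$). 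Collecting across all $D \in \mathcal{D}$ gives
\begin{equation*}
    \det(L_{G,w}) \;=\; \sum_{F} w^{\times}(F) \cdot x^{\beta(F)},
\end{equation*}
where $F$ ranges over all spanning functional subgraphs of $G$ whose cycles are all singletons.

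Finally, grouping this sum by color-profile, the coefficient of the monomial $x^{\alpha}$ in $\det(L_{G,w})$ is exactly $\sum_{F \in \mathcal{F}_{\alpha}} w^{\times}(F)$, which is the claim. The only step requiring care is the distributive expansion in the second paragraph, where one must verify that the assumption of no duplicate edges makes the bijection between nonzero monomials and colored functional subgraphs one-to-one; without that assumption one would instead pick up a $d_{ijc}$-multiplicity per arc, as in the unweighted case. Since this is the very reason for normalizing the input at the start of the section, the argument goes through cleanly.
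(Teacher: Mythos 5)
Your proof takes essentially the same route as the paper, which simply instantiates Moon's identity with $z_{ij}=\sum_{c} w_{ijc}\,x_c$ and observes that each singleton-cycle spanning functional subgraph $F$ now contributes $w^{\times}(F)$ to the coefficient of $x^{\alpha}$ instead of $1$; your write-up is a correct, more detailed version of that argument. One justification is misstated, though it does not damage the proof: ``no duplicate edges'' does \emph{not} imply that at most one color $c$ has $w_{ijc}\neq 0$ for a given pair $(i,j)$, since arcs of different colors may coexist between $i$ and $j$; what the assumption actually gives you is that for each pair $(i,j)$ \emph{and each color} $c$ there is at most one arc, so each choice of one summand per factor in the distributive expansion still corresponds to a unique colored spanning functional subgraph, which is all your argument needs.
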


From Lemma~\ref{lem:nWtFuncDigraphs} one can prove this weighted variant of Theorem~\ref{thm:ext-Tutte} by following the same steps as in the previous section.

\begin{mytheorem}
\label{thm:ext-WtTutte}
    Given a $q$-colored multidigraph $G$ on the vertex-set $[n]$ without self-loops and duplicate edges, a weight function $w: E(G) \to \mathbb{Z}^+$, a vertex $s \in V(G)$ and a color-constraint $\alpha \in \mathbb{N}^{q-1}$, the coefficient of the monomial $x^{\alpha}$ in the determinant polynomial of the submatrix obtained by deleting the $s$-th row and column of the weighted symbolic in-degree Laplacian $L'_{G,w}$ is $\sum_{T \in \mathcal{T}_{\alpha}} w^{\times}(T)$, where $\mathcal{T}_{\alpha}$ is the set of $\alpha$-colored $s$-arborescences $G$.
\end{mytheorem}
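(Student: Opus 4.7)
The plan is to mirror the derivation of Theorem~\ref{thm:ext-Tutte} step by step, replacing cardinalities by weighted sums using Lemma~\ref{lem:nWtFuncDigraphs} in place of Lemma~\ref{lem:nFuncDigraphs}. First I would form the auxiliary weighted digraph $\overleftarrow{G^s}$ by reversing every arc of $G$ (each reversed arc inheriting the weight of its original) and attaching a single self-loop of color $q$ at $s$ with weight $1$. Extending $w$ to this new edge by $w(\text{loop}) = 1$ ensures $w_{ssq} = 1$ in $L'_{G^s,w}$ and $L_{\overleftarrow{G^s},w}$.

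Next I would reuse the bijection $\psi \colon \mathcal{T}_\alpha \to \mathcal{F}_\alpha$ from the unweighted proof, where $\mathcal{F}_\alpha$ denotes the set of $\alpha$-colored spanning functional subgraphs of $\overleftarrow{G^s}$ whose cycles are all singletons. The crucial new observation is that reversing arcs preserves their weights and the added self-loop carries weight $1$, so $w^{\times}(\psi(T)) = w^{\times}(T)$ for every $T \in \mathcal{T}_\alpha$. Therefore
\[
    \sum_{T \in \mathcal{T}_\alpha} w^{\times}(T) \;=\; \sum_{F \in \mathcal{F}_\alpha} w^{\times}(F),
\]
and by Lemma~\ref{lem:nWtFuncDigraphs} applied to $\overleftarrow{G^s}$ the right-hand side equals the coefficient of $x^\alpha$ in $\det(L_{\overleftarrow{G^s},w})$, which coincides with $\det(L'_{G^s,w})$ by the same identity $L_{\overleftarrow{H}} = L'_H$ used in the unweighted case.

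To reduce this to the statement about the deleted submatrix of $L'_{G,w}$, I would observe that no incoming arc at $s$ can participate in an $s$-arborescence of $G$, so we may discard all such arcs without altering $\mathcal{T}_\alpha$. After this preprocessing, the $s$-th row of $L'_{G^s,w}$ contains $1$ in the $(s,s)$ entry (from the unit-weight self-loop) and $0$ elsewhere, while the rest of the matrix agrees with $L'_{G,w}$ outside the $s$-th row and column. Expanding along the $s$-th row then yields
\[
    \det(L'_{G^s,w}) \;=\; \det\!\bigl(L'_{G,w}[\,\overline{s},\overline{s}\,]\bigr),
\]
where $\overline{s}$ indicates deletion of the $s$-th row and column, finishing the proof.

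The main obstacle, such as it is, is careful bookkeeping: one must fix the self-loop weight to $1$ so that $w^{\times}$ is genuinely preserved under $\psi$, and one must verify that dropping incoming arcs at $s$ does not perturb any other diagonal entry of $L'_{G,w}$ (it only shrinks the $(s,s)$ entry, which is about to be deleted anyway). Once these two points are checked, the rest is a transcription of the unweighted argument.
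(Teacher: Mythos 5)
Your proposal is correct and follows exactly the route the paper intends: the paper's own ``proof'' of this theorem is a one-line remark that one should repeat the steps of the unweighted derivation (the bijection $\psi$ to functional subgraphs of $\overleftarrow{G^s}$, Lemma~\ref{lem:nWtFuncDigraphs} in place of Lemma~\ref{lem:nFuncDigraphs}, and the row expansion after discarding incoming arcs at $s$), which is precisely what you have written out. Your explicit bookkeeping --- assigning weight $1$ to the added self-loop so that $w^{\times}(\psi(T)) = w^{\times}(T)$ --- is the one detail the paper leaves implicit, and you handle it correctly.
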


Now we describe an attempt to find the minimum weight of an $\alpha$-colored $s$-arborescence in the given digraph. For any prime $r$, let $w_{r}$ denote a new weight function defined by $w_{r}(e) = {r}^{w(e)}$ for all $e \in E(G)$. Let $c_{\alpha, r}$ denote the coefficient of $x^{\alpha}$ in the determinant of the submatrix of $L'_{G,w_r}$ obtained by deleting the $s$-th row and column. Theorem~\ref{thm:ext-WtTutte} shows that $c_{\alpha, r} = \sum_{T \in \mathcal{T}_{\alpha}} r^{w(T)}$, where $\mathcal{T}_{\alpha}$ is the set of $\alpha$-colored $s$-arborescences in $G$ and $w(T) = \sum_{e \in E(T)} w(e)$ is the total weight of $T$ under the original weight function $w$. 
This idea is already used in \cite{mulmuley1987matching} to find a matching of a given weight and in \cite{BarahonaP87} to find an arborescence of a given weight. 

If there is a unique min-weight arborescence $T^*$ in $\mathcal{T}_{\alpha}$, then we can extract $w(T^*)$ from $c_{\alpha, r}$. In this case, $c_{\alpha, r}$ will be the sum of $r^{w(T^*)}$ and strictly larger powers of $r$. Hence $w(T^*)$ is the largest $k$ for which $r^k$ is a factor of $c_{\alpha, r}$. But this method will fail if the number of min-weight arborescences in $\mathcal{T}_{\alpha}$ is a multiple of $r$. One way to solve this is by scaling up the original weights by a factor of $2mn$ and then perturbing the resulting weights by adding a number chosen uniformly at random from $[2m]$. The Isolation Lemma \cite{mulmuley1987matching} guarantees that with probability at least $1/2$, the min-weight solution is unique. And hence we can extract $w(T^*)$ from $c_{\alpha,2}$. But this results in a randomized algorithm. Hence we propose an alternate method to tackle the non-uniqueness.

Our method relies on the observation that we do not need uniqueness of the min-weight solution for the extraction of the minimum weight from $c_{\alpha, r}$ to work. It is sufficient that the number of min-weight solutions is not a multiple of $r$. Let $w_{min}$ be the minimum weight of a solution and $t$ be the number of min-weight solutions in $\mathcal{T}_{\alpha}$. One can verify that $\max\{k : r^k \mid c_{\alpha, r}\} \geq w_{min}$ and the equality holds if and only if $t$ is not a multiple of $r$. Let $r_1, \ldots, r_n$ be $n$ distinct single-precision primes larger than $m$. Since $\prod_{i=1}^n r_i > m^n$, no integer $t \in [m^n]$ is a multiple of all the primes in this list. We compute $c_{\alpha, r_i}$ and then $f(r_i) = \max\{k : r_i^k \mid c_{\alpha, r_i}\}$ for each $i \in [n]$.  Let $r_j$ be a prime in the list $r_1, \ldots, r_n$ that does not divide $t$. Hence $f(r_j) = w_{min}$. For every other prime $r_i$ in the list, $f(r_i) \geq w_{min}$. Hence, even though we do not know $r_j$ in advance, we can extract $w_{min}$ as $\min\{f(r_i) : i \in [n]\}$. Once we know $w_{min}$, we can find a min-weight solution by iterating through all the edges $e$ in $E(G)$ and deleting $e$ from $G$ if the minimum weight of a solution in $G \setminus e$ is also $w_{min}$. This procedure will terminate with a min-weight solution. The next lemma is the analogue of Lemma~\ref{lemma:deterministic-find-arb} for the weighted case.


\begin{mylemma}
    \label{lem:minCCarb}
    Given a $q$-colored multidigraph $G$ on $n$ vertices and $m$ edges, a weight function $w: E(G) \to [W]$ with $W \in n^{\mathcal{O}(1)}$, a vertex $s \in V(G)$ and a color-constraint $\alpha \in \mathbb{N}^{q-1}$, we can find the minimum weight of an $\alpha$-colored $s$-arborescence in $G$ using $\mathcal{\widetilde{O}}(2^q \cdot n^{q+3})$ multi-precision operations. 
    Moreover,  we can find a min-weight solution using $\mathcal{\widetilde{O}}(m \cdot 2^q \cdot n^{q+3})$ multi-precision operations.
\end{mylemma}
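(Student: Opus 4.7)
The plan is to formalize the algorithm already sketched in the paragraphs preceding the statement, splitting it into two phases: (i) a subroutine that computes the minimum weight $w_{min}$ of an $\alpha$-colored $s$-arborescence, and (ii) a greedy edge-deletion routine that converts knowledge of $w_{min}$ into a concrete solution.

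For phase (i), fix $n$ distinct single-precision primes $r_1 < \dots < r_n$ with $r_i > m$ for every $i$. For each $i$, build the weighted in-degree Laplacian $L'_{G,w_{r_i}}$ from the exponentiated weights $w_{r_i}(e) = r_i^{w(e)}$, use the Evaluate--Interpolate method to compute $c_{\alpha,r_i}$, the coefficient of $x^\alpha$ in the determinant of the submatrix obtained by deleting the $s$-th row and column, and extract $f(r_i) := \max\{k : r_i^k \mid c_{\alpha,r_i}\}$ by successive division. Output $\min_i f(r_i)$. By Theorem~\ref{thm:ext-WtTutte}, $c_{\alpha,r_i} = \sum_{T \in \mathcal{T}_\alpha} r_i^{w(T)}$, so if $t$ denotes the number of arborescences in $\mathcal{T}_\alpha$ of weight exactly $w_{min}$, then $f(r_i) \geq w_{min}$ always, with equality precisely when $r_i \nmid t$.

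The main obstacle, and the only genuinely new ingredient compared with Lemma~\ref{lemma:deterministic-find-arb}, is to ensure that the deterministic choice of primes exposes $w_{min}$ at least once. For this, I would observe that an $s$-arborescence uses exactly $n-1$ edges, so $t \leq \binom{m}{n-1} < m^n < \prod_i r_i$; hence $t$ cannot be divisible by every $r_i$, and any prime $r_j$ in the list with $r_j \nmid t$ achieves $f(r_j) = w_{min}$, so $\min_i f(r_i) = w_{min}$. Phase (ii) is then straightforward: iterate through the edges of $G$ in any order and delete $e$ whenever $G \setminus e$ still contains an $\alpha$-colored $s$-arborescence of weight $w_{min}$, tested by re-running phase (i). A min-weight arborescence of the current graph survives every such deletion, and if the final graph contained an edge outside some min-weight arborescence, then that edge would have been removed when first processed; so the final graph equals a min-weight $\alpha$-colored $s$-arborescence.

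For the complexity, each entry of $L'_{G,w_{r_i}}$ is a polynomial of degree one in each of the $q-1$ indeterminates with multi-precision integer coefficients at most $r_i^W$, so Theorem~\ref{thm:horowitz-sahni} gives $\widetilde{\mathcal{O}}(2^q \cdot n^{q+2})$ multi-precision operations per determinant. Identifying the coefficient of $x^\alpha$ needs an $\mathcal{O}(n^q)$ scan, and computing $f(r_i)$ costs $\mathcal{O}(Wn) = n^{\mathcal{O}(1)}$ further divisions, both absorbed into the $\widetilde{\mathcal{O}}$. Summing over the $n$ primes gives the claimed $\widetilde{\mathcal{O}}(2^q \cdot n^{q+3})$ bound for phase (i), and invoking phase (i) at most $m$ times in phase (ii) gives $\widetilde{\mathcal{O}}(m \cdot 2^q \cdot n^{q+3})$ multi-precision operations overall, matching the lemma.
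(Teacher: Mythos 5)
Your proposal is correct in substance and follows essentially the same route as the paper: the same $n$ single-precision primes $r_i>m$ with $\prod_i r_i>m^n\geq t$, the same extraction of $f(r_i)=\max\{k: r_i^k\mid c_{\alpha,r_i}\}$ and recovery of $w_{min}=\min_i f(r_i)$ via Theorem~\ref{thm:ext-WtTutte}, and the same edge-deletion loop for the search version. The one place your accounting slips is the cost of computing $f(r_i)$: successive division uses $\mathcal{O}(nW)$ divisions (since $f(r_i)$ can be as large as $nW+n\log_{r_i}m$), and for a general $W\in n^{\mathcal{O}(1)}$ this is a polynomial, not polylogarithmic, overhead, so it is \emph{not} absorbed into $\mathcal{\widetilde{O}}(2^q\cdot n^{q+2})$ per prime; the paper instead finds the largest power of $r_i$ dividing $c_{\alpha,r_i}$ using only $\mathcal{O}(\log(nW))$ divisions (e.g., by repeated squaring and binary search), which is what makes the stated bound hold for all polynomial $W$.
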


Since $r^k$ can be computed in $\mathcal{O}(\log k)$ multi-precision operations, the time to construct $L'_{G, w_{r}}$ is $\mathcal{O}(n^2q \log W)$. Let $M$ be the sub-matrix of $L'_{G,w_r}$ obtained by deleting the $s$-th row and column. Each entry of $M$ has degree at most $1$ in each of the $q-1$ variables and Theorem~\ref{thm:horowitz-sahni} guarantees that the Evaluate-Interpolate method takes $\mathcal{\widetilde{O}}(2^q \cdot n^{q+2})$ multi-precision operations. Once the determinant is obtained, a linear scan over all monomials gives us the coefficient $c_{\alpha,r}$ of $x^{\alpha}$. Note that $c_{\alpha, r} \leq m^n r^{nW}$. Therefore, computing $f(r)$ takes $\mathcal{O}(\log (n W))$ divisions. Since we compute $L'_{G,w_r}$  and $f(r)$ for each of the $n$ primes, the total number of multi-precision operations to find $w_{min}$ is $\mathcal{\widetilde{O}}(n(2^q \cdot n^{q+2} + \log (n W)))$ which is claimed in Lemma~\ref{lem:minCCarb} since $W \in n^{\mathcal{O}(1)}$. The time complexity of finding a  min-weight solution is $m$ times the above. This completes the time complexity analysis of Lemma~\ref{lem:minCCarb}.

Now, we will analyze the time complexity in single-precision operations using modular arithmetic and Chinese Remaindering. Notice that now we have to work with much larger coefficients than in the unweighted case. Each entry of $L'_{G,w_r}$ has coefficients upper bounded by $r^W$ and the coefficients in the resulting determinant polynomial are at most $m^n r^{nW}$. This time, we pick $nW$ primes $p_1,  p_2, \ldots, p_{nW},$ each larger than $\max\{2n,m r\}$, repeat the determinant computations over each $p_i$. This blows up the running time by a factor of $nW$. The final reconstruction of the coefficient of $x^\alpha$ takes $\mathcal{O}(n^2 \cdot W^2)$ time. Hence each determinant computation for each $r$ takes $\mathcal{\widetilde{O}}(n \cdot W \cdot 2^q \cdot n^{q+2}+n^2 \cdot W^2)$ single-precision operations. The final running time is hence $n$ times this. 
 

\begin{mytheorem}
    \label{thm:minCCarb}
    Given a $q$-colored multidigraph $G$ on $n$ vertices and $m$ edges, a weight function $w: E(G) \to [W]$ with $W \in n^{\mathcal{O}(1)}$, a vertex $s \in V(G)$ and a color-constraint $\alpha \in \mathbb{N}^{q-1}$, we can find the minimum weight of an $\alpha$-colored $s$-arborescence in $G$ in $\mathcal{\widetilde{O}}(2^q \cdot n^{q+4} \cdot W +n^3 \cdot W^2)$ time. 
    Moreover,  we can find a minimum weight solution in $\mathcal{\widetilde{O}}(m(2^q \cdot n^{q+4} \cdot W +n^3 \cdot W^2))$ time. 
\end{mytheorem}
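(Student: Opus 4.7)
The plan is to derive the single-precision bound by feeding Lemma~\ref{lem:minCCarb} through the same modular-arithmetic and Chinese Remaindering workflow that lifted Lemma~\ref{lemma:deterministic-find-arb} to Theorem~\ref{thm:deterministic-find-arb}. The only genuinely new bookkeeping in the weighted setting is that the substitution $w_r(e)=r^{w(e)}$ inflates the magnitudes of the coefficients inside $L'_{G,w_r}$, forcing us to use significantly more moduli.

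First, I would bound the coefficients of $\det(M)$, where $M$ is the $(n-1)\times(n-1)$ submatrix of $L'_{G,w_r}$ obtained by deleting the $s$-th row and column. Each entry of $L'_{G,w_r}$ is linear in every indeterminate $x_1,\ldots,x_{q-1}$ with integer coefficients of absolute value at most $r^W$, so the Leibniz expansion of $\det(M)$ produces integer coefficients of magnitude at most $m^n r^{nW}$ (ignoring polynomial factors already absorbed into $\mathcal{\widetilde{O}}$). Picking $nW$ distinct single-precision primes $p_1,\ldots,p_{nW}$, each larger than $\max\{2n,mr\}$, simultaneously ensures $\prod_i p_i>m^n r^{nW}$ so that the target coefficient is uniquely recoverable by the Chinese Remainder Theorem, and that enough distinct evaluation points are available for the Evaluate-Interpolate step of Theorem~\ref{thm:horowitz-sahni} to succeed modulo every $p_i$.

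Next I would run Evaluate-Interpolate modulo each $p_i$. Theorem~\ref{thm:horowitz-sahni} yields $\mathcal{\widetilde{O}}(2^q n^{q+2})$ single-precision operations per prime, since each entry of $M$ still has degree at most one in each of the $q-1$ indeterminates; summed over the $nW$ moduli this is $\mathcal{\widetilde{O}}(nW\cdot 2^q n^{q+2})$. A single reconstruction of $[x^{\alpha}]\det(M)$ from its $nW$ residues by the incremental Chinese Remainder algorithm costs $\mathcal{\widetilde{O}}((nW)^2)=\mathcal{\widetilde{O}}(n^2 W^2)$. Together these give $\mathcal{\widetilde{O}}(nW\cdot 2^q n^{q+2}+n^2 W^2)$ operations per weight-probing prime $r$. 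Lemma~\ref{lem:minCCarb} invokes this subroutine for each of the $n$ primes $r_1,\ldots,r_n$ and then returns $w_{\min}=\min_i f(r_i)$, so multiplying by $n$ yields the advertised $\mathcal{\widetilde{O}}(2^q n^{q+4} W+n^3 W^2)$ bound. The \emph{moreover} clause follows immediately because the edge-deletion scheme of Lemma~\ref{lem:minCCarb} invokes this minimum-weight computation once per edge, multiplying everything by $m$.

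The principal obstacle is purely accounting: one has to verify carefully that $nW$ moduli truly suffice, which is driven by the $m^n r^{nW}$ coefficient bound rather than the $m^n$ bound that sufficed in the unweighted case, and to confirm that each modulus is simultaneously single-precision and large enough for interpolation. No new combinatorial or algebraic idea beyond Lemma~\ref{lem:minCCarb} is required; the proof is a routine but careful conversion of arithmetic complexity into bit complexity.
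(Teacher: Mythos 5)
Your proposal is correct and follows essentially the same route as the paper: bounding the determinant's coefficients by $m^n r^{nW}$, choosing $nW$ single-precision primes exceeding $\max\{2n, mr\}$, running Evaluate--Interpolate modulo each, reconstructing via Chinese Remaindering in $\mathcal{\widetilde{O}}(n^2W^2)$ time, and multiplying by $n$ (for the probing primes $r_1,\ldots,r_n$) and by $m$ (for the edge-deletion search). Your explicit check that $\prod_i p_i > (mr)^{nW} \geq m^n r^{nW}$ is a welcome bit of care that the paper leaves implicit.
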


\section{Concluding Remarks}
The results of this paper straightaway pose two interesting open problems. The first is fixed-parameter tractability of \textsc{CC-Arb} with $q$ as the parameter. Our work shows that \textsc{CC-Arb} is in \XP\ when parameterized by $q$. The second one is the enumeration version of \textsc{CC-Arb}. Enumerating all $\alpha$-colored $s$-arborescences with polynomial delay (for fixed $q$) is an interesting problem. The algorithm that we have given for the search version of \textsc{CC-Arb} does not straightaway extend to an enumeration algorithm.

Though \textsc{CC-Arb} and \textsc{Exact Perfect Matching} in bipartite graphs reduce to \textsc{3-Matroid Intersection}, there seems to be a sharp contrast in their complexities. The former is in \P\ for every fixed $q$ while the latter is in \RP\ but not known to be in \P\ even for $q=2$. \textsc{CC-Arb} is \textsc{3-Matroid Intersection} where two of the three matroids are partition matroids and the third one is a graphic matroid. \textsc{Exact Perfect Matching} restricted to bipartite graphs is \textsc{3-Matroid Intersection} where all the three matroids are partition matroids. It is intriguing whether the hardness of the problems is dependent on this difference. A deep dive into this is tempting.  

We also hope that our colored extensions of the classic matrix-tree theorems of Tutte and Kirchhoff will open another bridge between algebra and parameterized algorithms. In fact, a fixed-parameter tractable algorithm for \textsc{CC-Arb} with $q$ as the parameter can give practical algorithms to compute the symbolic determinant of a subclass of large matrices of linear polynomials when the number of indeterminates is small.


\bigskip \noindent \textbf{Acknowledgement. }We are grateful to K. Murali Krishnan and Piyush P. Kurur for their comments and suggestions.

\bibliography{ref}

\end{document}